\newtheorem{theorem}{Theorem}
\newtheorem{corollary}{Corollary}
\newtheorem{definition}{Definition}
\newtheorem{example}{Example}
\newtheorem{remark}{Remark}
\definecolor{applegreen}{rgb}{0.55, 0.71, 0.0}
\newcommand{\xleftrightarrow}[2][]{\ext@arrow 3359\leftrightarrowfill@{#1}{#2}}
\newcommand{\xdasharrow}[2][->]{
\tikz[baseline=-\the\dimexpr\fontdimen22\textfont2\relax]{
\node[anchor=south,font=\scriptsize, inner ysep=1.5pt,outer xsep=2.2pt](x){#2};
\draw[shorten <=3.4pt,shorten >=3.4pt,dashed,#1](x.south west)--(x.south east);
}
}
\def\xtwoheadrightarrowfill@{%
 \arrowfill@\relbar\relbar{\rightarrow\mkern-15mu\rightarrow}}
\newcommand*\xtwoheadrightarrow[2][]{%
 \ext@arrow 0099\xtwoheadrightarrowfill@{#1}{#2}}
\newcommand{\Act}{\ensuremath{\mathit{Act}}}
\newenvironment{todo}{\bigskip\hrule\medskip\noindent}{\medskip\hrule\bigskip}
\newcommand{\dland}{\wedgedot}
\newcommand{\dlandomega}{\,\,{\dland}^{\omega} }
\newcommand{\eUntil}{\wedgedot_]}
\newcommand{\eAfter}{\wedgedot_[}
\newcommand{\eBetween}[1]{{\wedgedot_\less\,}{#1}{\,\wedgedot_\gtr}}
\newcommand{\orLTL}{\! \! \!\shortmid  \! \! \! }
\title{Causality for General LTL-definable Properties}
\author{
Georgiana Caltais
\email{georgiana.caltais@uni-konstanz.de}
\and
Sophie Linnea Guetlein
\email{linnea.guetlein@uni-konstanz.de}
\and
Stefan Leue
\email{stefan.leue@uni-konstanz.de}
\institute{Department for Computer and Information Science
\\University of Konstanz, Germany}
}
\begin{document}
\maketitle

\begin{abstract}
In this paper we provide a notion of causality for the violation of general Linear Temporal Logic (LTL) properties. The current work is a natural extension of the 
previously proposed approach handling causality in the context of LTL-definable safety 
properties~\cite{LeiLeu13d,DBLP:conf/vmcai/Leitner-FischerL13}. 
The major difference is that now, counterexamples of general LTL properties are not merely finite traces, but infinite lasso-shaped traces. We analyze such infinite counterexamples and identify the relevant ordered occurrences of causal events, obtained by unfolding the looping part of the lasso shaped counterexample sufficiently many times. The focus is on LTL properties from practical considerations: the current results are to be implemented in QuantUM, a tool for causality checking, that exploits explicit state LTL model checking.
\end{abstract}

\section{Introduction}

The importance and complexity of software driven systems is steadily increasing. Software plays a central r\^ ole in daily used objects, such as computers and mobile phones, but also in other areas, for example medical systems, aircraft and automobiles. Particularly in these latter areas, software failures may entail major environmental harm and/or serious injuries of humans. Software systems whose malfunction has such serious consequences are also called safety-critical systems. This paper addresses methods to analyze models of such systems for the detection of
ordered sequences of events that can be considered causal for the malfunctioning of such a system. Of particular 
importance in this setting is the identification of actual causes, {\it i.e.}, sequences of events that are indispensable 
for the malfunctioning to occur and not just mere ``noise" in the system execution.

Model-checking~\cite{DBLP:books/daglib/0020348} is a formal verification technique for systematically checking whether certain temporal requirements 
are satisfied by a system model. Given a state-based model $M$ of the considered system and a property specification $\varphi$, model checkers return a counterexample if $\varphi$ is not satisfied by $M$. This counterexample typically is an execution trace that includes a violation of the property $\varphi$ and can be used to understand the cause of the property violation and to fix the problem. However, counterexamples can be very long and often contain numerous events that are of no relevance to the violation of $\varphi$. Furthermore, there can be a very large number of counterexample
traces in $M$ that all lead to the violation of $\varphi$. 

In precursory work~\cite{LeiLeu13d,DBLP:conf/vmcai/Leitner-FischerL13},   
model checking of reachability properties has been extended to causality checking by considering all 
traces in a model that violate a system safety property expressed by a reachability property $\varphi$. 
Inspired by the actual cause conditions defined in the Structural Equation Model (SEM) for causality in 
systems~\cite{halpern2005causes,DBLP:conf/ijcai/Halpern15} 
the work in~\cite{LeiLeu13d,DBLP:conf/vmcai/Leitner-FischerL13}
defines actual cause conditions on 
ordered sequences of events that correspond to computations in a Transition System model.

In this paper we provide a notion of causality for the violation of general Linear Temporal Logic (LTL) properties. 
Counterexamples of such properties can always be represented by $\omega$-regular expressions of the 
form $uv^{\omega}$, where $u$ and $v$ are regular expressions~\cite{DBLP:books/sp/Peled2001}. 
Counterexamples of this type are also often referred to as ``lasso-shaped". 
They consist of an initial path fragment that can witness the violation of a ``something bad never happens''-kind of property ({\it i.e.}, a safety property) followed by a loop that can witness the violation of a ``something good eventually happens''-kind of property ({\it i.e.}, a liveness property).
In particular, for the case of ``pure'' safety properties, the lasso ends in a self-loop state where the property is violated.

Consider, for an example, the behaviour of an elevator system as follows:

\begin{example}[Elevator]\label{e.g.:elevator}
The elevator can commute between three floors (0,1,2). On each floor there is a button that can be pressed in order to call the elevator. Whenever a button on some floor is pressed, the elevator will try to go to that floor immediately. If two buttons are pressed, the elevator will go to the lower floor first.
We use the identifier $E_i$ to denote the event ``elevator is on floor $i$'', and $B_i$ to denote ``button on floor $i$ is pressed'', for $i \in \{0, 1, 2\}$.

A liveness property is, for example, that whenever a button on the second floor is pressed, the elevator will go there eventually. 
Assume we are interested in finding those sequences of events that lead to a hazard in which the button on the second floor is pressed, but the elevator never arrives at the second floor.
A corresponding counterexample is a lasso-shaped execution in which $B2$ can be observed, whereas $E2$ occurs neither along the initial path fragment after $B2$, nor in the loop.
\end{example}

In this paper, causality checking for general definable LTL properties is done by identifying the counterexample of a temporal property with a so-called Event Order Logic formula. This formula encodes the relevant ordered occurrences of causal events, and is obtained by unfolding the loop-counterexample ``sufficiently many times".

\paragraph{Related work.}{
The idea of exploiting counterexamples as a debugging aid, in order to understand what determined a certain system failure, has been addressed by other works as well. We refer, for instance, to the results in~\cite{DBLP:conf/cav/BeerBCOT09} that uses the notion of causality in~\cite{halpern2005causes,DBLP:conf/ijcai/Halpern15} and provides the user a visual explanation of the failure by marking causes as red dots along the counterexample trace. In~\cite{DBLP:conf/cav/BeerBCOT09}, causes for the violation of an LTL property are computed via an over-approximation algorithm.
For another example, we refer to the work in~\cite{DBLP:conf/nfm/KumazawaT11}, where errors in system models are elicited from similar counterexamples witnessing the violation of liveness properties. 

It is certain that imposing minimality conditions on the size of causal explanations is desirable.
In our work, we adapt the approach from~\cite{halpern2005causes,DBLP:conf/ijcai/Halpern15}, and formalise a notion of causality that is minimal with respect to the number of events ({\it i.e.}, system actions) it encompasses.
In a similar spirit,~\cite{DBLP:conf/tacas/SchuppanB05} proposes a methodology for computing
shortest counterexamples for symbolic model checking of so-called LTL with past formulae.
Other ``nice to have'' properties of causality such as, compositionality, for instance, were addressed in~\cite{Gossler10,Gossler14,Gossler15,Gossler15b,DBLP:journals/corr/CaltaisLM16}.

For a more comprehensive survey on principles, algorithms and applications of counterexample analysis, we refer to~\cite{DBLP:conf/birthday/ClarkeV03}.
}

\paragraph{Contributions.}{In this paper we introduce a notion of causality with respect to the violation of  general LTL properties.
This is an extension of the work in~\cite{DBLP:conf/vmcai/Leitner-FischerL13}, where causality was handled in the context of safety LTL properties. Our main contributions include an adaptation of the so-called Event Order Logic (EOL) in~\cite{DBLP:conf/vmcai/Leitner-FischerL13} in order to enable compact (finite) descriptions of what caused the violation of a system failure. The proposed notion of causality incorporates a series of properties to be satisfied by the EOL formulae characterising property violations. A notion of soundness and completeness depending on a complete enumeration of all bad traces ({\it i.e.}, counterexamples) and good traces is also established. We show that causality in the sense of~\cite{DBLP:conf/vmcai/Leitner-FischerL13} is equivalent with causality in the current paper, for the case of safety LTL properties.}

\paragraph{Structure of paper.}{Section~\ref{sec:prelim} briefly introduces the formal framework for analysing counterexamples witnessing the violation of LTL system properties in the context of transition system models. The corresponding extension of EOL to describe such counterexamples is provided in Section~\ref{sec:eol}. Section~\ref{sec:cause} introduces the proposed notion of causality. In Section~\ref{sec:sound-complete} we discuss soundness and completeness of our approach. Section~\ref{sec:conc} draws the conclusions and provides pointers to future work.
}

\section{Preliminaries}\label{sec:prelim}
In this section we introduce the formal framework for analysing counterexamples witnessing he violation of Linear Temporal Logic~\cite{pubsdoc:logic-computer-science-second} system properties in the context of transition systems.

\begin{definition}[Transition Systems (TS's)]
A TS is a tuple $T = (S, Act, \rightarrow, I, AP, L)$, where
$S$ is a finite set of states,
$Act$ is a set of actions,
$\rightarrow \subseteq S \times Act \times S$ is a transition relation,
$I \subseteq S$ is a set of initial states,
$AP$ is a set of atomic propositions,
and $L: S \rightarrow \mathcal{P}(AP)$ is a function associating to states in $S$ a set of atomic propositions in $AP$.

For $(s, \alpha, s') \in ~\rightarrow$ we also write $s \xrightarrow{\alpha} s'$.
In the remainder of this paper, for each $\alpha \in Act$, we consider an atomic proposition or \emph{event variable} $a_{\alpha} \in AP$ such that: given $s' \in S$, it holds that $a_{\alpha} \in L(s')$ whenever there exists $s \in S$ with $s \xrightarrow{\alpha} s'$.

We define an execution, or \emph{execution trace} of $T$, as a possibly infinite sequence $\sigma = s_0 \alpha_1 s_1\alpha_2 s_2 \ldots $ with $s_0 \in I$ and $s_i  \xrightarrow{\alpha_{i+1}} s_{i+1}$ for all $i \ge 0$. Moreover, for all $i \ge 0$ we write $\sigma[i\ldots]$ to represent the execution $s_i \alpha_{i+1} s_{i+1} \ldots~$. Additionally, for all $0 \leq i < j$ we write $\sigma[i.. j]$ to represent the finite execution  
$s_i \alpha_{i+1} \ldots \alpha_j s_j $.

%
%
%
\end{definition}

Given an execution trace $\sigma' = s_0' \alpha_1' s_1'\alpha_2' \ldots$, we write $s_0 \alpha_1 : \sigma'$ as a shorthand for the execution trace $\sigma = s_0 \alpha_1 s_0' \alpha_1' s_1'\alpha_2' \ldots$.
For simplicity of notation, we sometimes write $\alpha'_1 \alpha'_2 \ldots$, or $a_{\alpha'_1} a_{\alpha'_2} \ldots$ to equivalently represent $\sigma'$.

\begin{remark} In this paper, we consider TS's without terminal states, i.e., TS's for which all executions are infinite. Observe that this is not a limitation. Finite executions ending in a terminal state $s$ can be straightforwardly extended to infinite executions via a transition $s \xrightarrow{\lambda} s_{\lambda}$ such that $s_{\lambda}$ has a self-loop labelled $\lambda$, i.e., $s_{\lambda} \xrightarrow{\lambda} s_{\lambda}$.
\end{remark}


We further focus on formalising properties of TS's.
A safety property can be seen as a requirement that some bad event never happens. More formally, a property $P_{\textnormal{\it{safe}}}$ is a safety property if and only if every path, or execution that {\it violates} $P_{\textnormal{\it{safe}}}$ has a finite prefix that can not be extended to a path satisfying $P_{\textnormal{\it{safe}}}$. Intuitively, this means that if a safety property is violated, this violation already happens after the model has passed a finite sequence of states and after this finite sequence the violation is unrecoverable. Consequently, if we want to check whether a safety property is satisfied or not, it suffices to only look at finite paths of the system.  
A well known approach for reasoning on the violation of safety properties is model-checking implemented via simple depth first search (DFS) algorithms~\cite{DBLP:books/daglib/0020348}. These algorithms check whether starting from an initial state we can find a path to some state of the model where the bad event specified by the safety property happens. If such a path can be found, the property is violated.

A liveness property requires that some good event eventually happens.
It follows that when checking whether a liveness property is satisfied or not it does {\it not} suffice to only look at finite execution fragments of the system. 
Orthogonally to the model-checking of safety properties, reasoning on the violation of liveness properties is performed via nested depth first search (NDFS)~\cite{DBLP:journals/fmsd/CourcoubetisVWY92}. This algorithm searches for an infinite path in the model, such that the good event described by the liveness property does not hold along that path. If such a path can be found, the property is violated.
%
%

Linear time (LT) properties can be expressed in terms of a safety and a liveness property, based on the {\it Decomposition Theorem} $3.37$ in~\cite{DBLP:books/daglib/0020348}.
Consequently, reasoning on the violation of LT properties requires an NDFS-based approach.

We further provide a brief overview on \emph{Linear Temporal Logic} (LTL)~\cite{pubsdoc:logic-computer-science-second} -- a formalism to describe system properties. 
Intuitively, LTL formulae range over the atomic proposition $\mathit{true}$, that holds in any state of a transition system and, respectively, over atomic propositions $a$ satisfied within a state $s$ whenever the labelling function indicates so. 
Recursively, LTL formulae are defined as \emph{disjunctions} ($\,\,\orLTL\,\,$), \emph{conjunctions} ($\&$) and \emph{negations} ($\sim$) of formulae. The \emph{next} ($X$) operator indicates the satisfiability of a property starting with the ``next'' state, whereas the \emph{until} ($U$) operator indicates the satisfiability of a formula $\phi_1$ all the time until a formula $\phi_2$ is finally satisfied. The \emph{eventually} ($\Diamond$) and \emph{generally} ($\square$) operators indicate the satisfiability of a formula ``at some point'' in the future and, respectively, ``all the time''.

\begin{definition}[Linear Temporal Logic (LTL)]\label{def:LTL-sem}
LTL formulae over the set $AP$ of atomic propositions are built according to the following grammar: 
\[
\phi, \phi_1, \phi_2 ::= {\mathit true} \mid a \mid \phi_1 \orLTL \phi_2\mid \phi_1 ~ \& ~ \phi_2 \mid \sim \phi \mid X \phi \mid \phi_1 ~ U ~ \phi_2 \mid \Diamond \phi \mid \square \phi ~~~~~(a\in AP)
\]

LTL formulae are interpreted over transition systems without terminal states $T = (S, Act, \rightarrow, I, AP, L)$. 
Let $\sigma = s_0 \alpha_1 s_1\alpha_2 s_2 \ldots$ be an execution trace in $T$. The following hold:
\begin{itemize}
\item $\sigma \vDash {\mathit true} $
\item $\sigma\vDash a$ iff $a \in L(s_0)$
\item $\sigma \vDash \sim \phi$ iff not $\sigma  \vDash \phi$
\item $\sigma \vDash \phi_1 \orLTL \phi_2$ iff $\sigma  \vDash \phi_1$ or $\sigma \vDash  \phi_2$
\item $\sigma \vDash \phi_1 ~ \& ~ \phi_2$ iff $\sigma  \vDash \phi_1$ and $\sigma \vDash  \phi_2$
\item $\sigma \vDash X \phi$ iff $\sigma[1..]  \vDash \phi$
\item $\sigma \vDash \phi_1 ~ U ~ \phi_2$ iff $\exists k \ge 0: \sigma[k..] \vDash \phi_2$ and $\forall 0 \le i < k: \sigma[i..] \vDash \phi_1$
\item $\sigma \vDash \Diamond \phi$ iff $\exists k \ge 0: \sigma[k..] \vDash \phi$
\item $\sigma \vDash \square \phi$ iff $\forall k \ge 0: \sigma[k..] \vDash \phi$
\end{itemize}

We write $(\varphi_1 \rightarrow \varphi_2)$ as a syntactic sugar for $(\sim \varphi_1 \orLTL \varphi_2)$, and $(\varphi_1 \leftrightarrow \varphi_2)$ as a syntactic sugar for $((\sim \varphi_1 \rightarrow \varphi_2) \& (\sim \varphi_2 \rightarrow \varphi_1))$.

We say that a transition system $T$ satisfies a LTL formula $\phi$, written as $T \vDash \phi$, if and only if for all executions $\sigma$ of $T$ it holds that $\sigma\vDash \phi$.
\end{definition}

If a system model, or TS in our setting, does not satisfy some given LTL-definable property $P$, there must be an execution witnessing the violation of the property. Such executions are called {\it counterexamples}.
For safety properties, counterexamples are {\it finite} execution fragments that start in an initial state of the system and lead to an undesired state where ``something bad'' actually happens.
For liveness properties, counterexamples must be {\it infinite} executions, because every finite path can still be extended to a path satisfying the liveness property and does, therefore, not suffice as an example for the violation of the property. An infinite path that violates a liveness property is lasso-shaped.

\section{Event Order Logic}\label{sec:eol}

The work in~\cite{DBLP:conf/vmcai/Leitner-FischerL13} introduces the so-called Event Order Logic (EOL). Intuitively, formulae in EOL are used to express causality classes for counterexamples. Causality classes can be seen as generalized counterexamples. A causality class represents several counterexamples, all leading to the property violation in the ``same way". Such counterexamples may only differ in some other events that are not essential for the property violation.

In the case of liveness properties, our counterexamples must be lasso shaped, that is, they contain a loop at the end. The sequence ``$E0, B2, (B1, E1, B0, E0)^{\omega}$" \ is a counterexample of $\varphi := \square (B2 \rightarrow \Diamond E2)$ in the elevator model. In words, the elevator is stuck between the ground floor and the first floor, as $(B1, E1, B0, E0)^{\omega}$ indicates that the sequence $(B1, E1, B0, E0)$ keeps repeating forever. We can say that the cause of the property violation is that the buttons on the ground floor and the first floor are pressed repeatedly and between that, the elevator never has the chance to go to the second floor.

We see that the cause of a liveness property violation must consist of some events happening at the beginning and then some other events happening again and again (in the loop). Hence, in what follows, we propose an extension of EOL in~\cite{DBLP:conf/vmcai/Leitner-FischerL13} with so-called \emph{infinite} formulae to express infinite causal behaviours.

The work in~\cite{DBLP:conf/vmcai/Leitner-FischerL13} introduces two kinds of EOL formulae: \emph{simple} and \emph{complex}.
Intuitively, simple EOL formulae, usually denoted by $\phi$, are built over event variables $a_\alpha$ that are atomic propositions witnessing the execution of actions $\alpha$ \emph{at some point in the future}.
The satisfiability of atomic propositions is different within the frameworks of EOL and LTL: the latter assumes satisfiability with respect to the initial state of a trace, whereas the former has an ``eventually'' component. This difference is formalized in Definition~\ref{def:EOL-sem} providing the semantics of EOL.

Similarly to the case of LTL, simple EOL formulae are inductively defined using \emph{negation} ($\neg$), \emph{conjunction} ($\land$) and \emph{disjunction} ($\lor$). As a consequence of the observation above, formulae of shape $\phi_1 \land \phi_2$ (respectively, $\phi_1 \lor \phi_2$) read as: eventually $\phi_1$ will hold and (respectively, or) eventually $\phi_2$ will hold.

For technical reasons related to the semantics of the aforementioned infinite EOL formulae, we split the complex EOL formulae in~\cite{DBLP:conf/vmcai/Leitner-FischerL13} into: \emph{I-complex} and \emph{G-complex}, respectively.
We refer to Remark~\ref{rm:I-G-complex} for a more detailed explanation.

I-complex formulae, usually denoted by $\psi$, include simple EOL formulae, \emph{conjunctions} ($\land$) and \emph{disjunctions} ($\lor$) of I-complex formulae.
An I-complex formula $\psi_1 \dland \psi_2$ has an ``ordered-and''-like semantics and reads: first $\psi_1$ holds and then $\psi_2$.
Last, but not least, an I-complex formula $\psi_1 \eBetween{\phi} \psi_2$ reads: first $\psi_1$ holds, then $\psi_2$ holds and in between the ``interval'' determined by the satisfiability of $\psi_1$ and $\psi_2$ the simple EOL formula $\phi$ holds all the time.

G-complex formulae, usually denoted by $\theta$, range over I-complex formulae and encompass two more temporal operators: $\eUntil$ that has an ``{until}''-like semantics, and $\eAfter$ that has an ``{after}''-like semantics. More precisely, $\phi \eUntil \theta$ reads: $\theta$ will hold at some point in the future and until then, the simple EOL formula $\phi$ holds all the time.
Orthogonally, $ \theta \eAfter \phi$ reads: at some point $\theta$ holds, and after that, $\phi$ holds all the time.

Observe that the simple and, respectively, G-complex formulae in this paper have the same expressive power as the simple and, respectively, complex EOL formulae originally proposed in~\cite{DBLP:conf/vmcai/Leitner-FischerL13}.

To express infinite causal behaviour, we extend the EOL in~\cite{DBLP:conf/vmcai/Leitner-FischerL13} with the so-called \emph{infinite} formulae, usually denoted by $\xi$.
These are formulae built over the new logical symbol ${\dland}^{\omega}$. For a G-complex formula $\theta$ and an I-complex formula $\psi$ we write $\theta {\dland}^{\omega} \psi$ to express that first $\theta$ holds and then $\psi$ happens infinitely many times.

Formally, as the new EOL we obtain the following:
\begin{definition}[Extended Event Order Logic (EOL) -- Syntax]\label{def:EOL-syntax}
{\emph{Simple EOL formulae}} over a set $\mathcal{A}$ of event variables are formed according to the following grammar:
\[
\phi,\phi_1,\phi_2 ::= \top \mid a_{\alpha} \mid \lnot \phi \mid \phi_1 \land \phi_2  \mid \phi_1 \lor \phi_2~~~~~(a_{\alpha} \in \mathcal{A}).
\]

{\emph{Complex EOL formulae}} are of two kinds:
\begin{itemize}
\item \emph{I-complex EOL formulae}, formed according to the following grammar:
\[
\psi,\psi_1,\psi_2 ::=\phi \mid \psi_1 \dland \psi_2  \mid  \psi_1 \eBetween{\phi}  \psi_2 \mid \psi_1 \land \psi_2 \mid \psi_1 \lor \psi_2 
\]
where $\phi$ is a simple EOL formula.

\item \emph{G-complex EOL formulae}, formed according to the following grammar:
\[
\theta ::= \psi \mid \phi \eUntil \theta \mid \theta \eAfter \phi
\]
where $\phi$ is a simple EOL formula and $\psi$ an I-complex EOL formula.

%
\end{itemize}

{\emph{Infinite EOL formulae}} are formed according to the following grammar:
\[
\xi ::= \theta {\dlandomega} \ \psi
\]
where $\theta$ is a G-complex EOL formula and $\psi$ is an I-complex formula.
\end{definition}

We want an infinite execution $\sigma$ to satisfy an infinite EOL formula $\xi = \theta {\dlandomega} \psi$ if and only if (a) the events in $\theta$ occur in $\sigma$ in the order specified by $\theta$, and (b) the events of $\psi$ occur in $\sigma$ in the order specified by $\psi$, infinitely many times. 

As an example, consider the following execution $\sigma$ in a TS:
\begin{equation}\label{eq:inf-exec1}
\xymatrix@C=1.2cm@R=.25cm{
\sigma = s_0 \ar[r]^{\alpha_1}&  s_1 \ar@/^0.5pc/[r]^{\alpha_2} & s_2 \ar@/^0.5pc/[l]^{\alpha_3}
}
\end{equation}

It is easy to see that $\sigma$ satisfies the formula 
\[
\xi = a_{\alpha_1} \dlandomega ( a_{\alpha_2} \dland  a_{\alpha_3})
\]
where $a_{\alpha_i}$ is the event variable corresponding to $\alpha_i$ for $i \in \{1,2,3\}$.
We see that $\sigma$ contains a finite part $\sigma_1 = s_0 \xrightarrow{\alpha_1} s_1$ determining the event variable $a_{\alpha_1}$ to occur, and a finite part in the loop $\sigma_2=s_1 \xrightarrow{\alpha_2} s_2 \xrightarrow{\alpha_3} s_1$ where $a_{\alpha_2}$ and $a_{\alpha_3}$ occur. Thus, an intuitive approach to decide whether an infinite execution trace satisfies an infinite EOL formula $\xi = \theta {\dlandomega} \psi$ is to split $\sigma$ into an initial final trace $\sigma_1$ and finite trace $\sigma_2$ inside of the loop, and check if $\sigma_1$ satisfies $\theta$ and if $\sigma_2$ satisfies $\psi$, respectively.

The following example shows that it is not enough to split the lasso shaped execution trace $\sigma$ into a first part $\sigma_1$ that contains all states up to the loop and a second part $\sigma_2$ that consists of the loop executed only once.
Consider the execution $\sigma$ in~(\ref{eq:inf-exec1}) and the EOL formula
$
\xi' = a_{\alpha_1} \dland a_{\alpha_3} \dlandomega (a_{\alpha_3} \dland a_{\alpha_2}).
$
Our execution $\sigma$ also satisfies $\xi'$ because in $\sigma$ the event $a_{\alpha_1}$ happens before $a_{\alpha_3}$ and after that, $a_{\alpha_2}$ happens after $a_{\alpha_3}$ infinitely many times.
Hence, in this case, the finite traces guaranteeing the satisfiability of $\xi'$ are as follows: 
\begin{equation}\label{eq:sigma'1}
\sigma_1 = s_0 \xrightarrow{\alpha_1} s_1 \xrightarrow{\alpha_2} s_2 \xrightarrow{\alpha_3} s_1
\end{equation}
is obtained by concatenating the sequence in $\sigma$ up to the loop, with one unfolding of the loop, whereas
\begin{equation}\label{eq:sigma'2}
\sigma_2 = s_1 \xrightarrow{\alpha_2} s_2 \xrightarrow{\alpha_3} s_1 \xrightarrow{\alpha_2} s_2 \xrightarrow{\alpha_3} s_1
\end{equation}
is the unfolding of the loop twice.

By following a similar pattern, consider $\sigma$ in~(\ref{eq:inf-exec1}) and the EOL formula \[
\xi'' = a_{\alpha_1} \dland a_{\alpha_3} \dland a_{\alpha_2} \dland a_{\alpha_3} \dland a_{\alpha_2} \dlandomega (a_{\alpha_2} \dland a_{\alpha_3} \dland a_{\alpha_2} \dland a_{\alpha_2} \dland a_{\alpha_2}).
\]
We want $\sigma$ to satisfy this formula as well, but $\sigma_1$ in~(\ref{eq:sigma'1}) and $\sigma_2$ in~(\ref{eq:sigma'2}) do not satisfy their corresponding EOL formulae in $\xi''$. We have to extend $\sigma_1$ until the loop has been executed three times, while $\sigma_2$ is defined by unfolding the loop four times. Hence, we get:
\begin{equation} 
\begin{array}{rcl}
\sigma_1 & = & s_0 \xrightarrow{\alpha_1} s_1 \xrightarrow{\alpha_2} s_2 \xrightarrow{\alpha_3} s_1 \xrightarrow{\alpha_2} s_2 \xrightarrow{\alpha_3} s_1 \xrightarrow{\alpha_2} s_2 \xrightarrow{\alpha_3} s_1\\
\sigma_2 & = & s_1 \xrightarrow{\alpha_2} s_2 \xrightarrow{\alpha_3} s_1 \xrightarrow{\alpha_2} s_2 \xrightarrow{\alpha_3} s_1 \xrightarrow{\alpha_2} s_2 \xrightarrow{\alpha_3} s_1 \xrightarrow{\alpha_2} s_2 \xrightarrow{\alpha_3} s_1
\end{array}
\end{equation}

Intuitively, $\sigma$ satisfies an infinite EOL formula $\xi = \theta {\dlandomega} \psi$ whenever $\sigma$ can be split into two finite executions $\sigma_1$ and $\sigma_2$, ``large enough'' to satisfy $\theta$ and $\psi$, respectively.


\begin{remark}\label{rm:I-G-complex}
As can be seen in Definition~\ref{def:EOL-syntax}, we choose to classify the complex EOL formulae in~\cite{DBLP:conf/vmcai/Leitner-FischerL13} into I-complex formulae $\psi$ and G-complex formulae $\theta$.
This is because we want to allow only interval-like formulae $\psi$ as right-hand side of the $\dlandomega$ operator.
The occurrence of $\phi \eUntil \psi$ or $\psi \eAfter \phi$ in a cycle does not make sense unless $\psi = \phi$, case in which the corresponding formulae $\xi$ can be equivalently expressed in terms of formulae $\theta \dlandomega (\phi \dland .. \dland \phi)$, where $\phi \dland .. \dland \phi$ stands for finite ordered conjunctions ($\dland$) of simple formulae $\phi$.
\end{remark}


The EOL semantics is translated to the setting of general LTL- properties and infinite loop-traces as follows:
\begin{definition}[EOL -- Semantics]\label{def:EOL-sem}
Let $T = (S, Act, \rightarrow, I, AP, L)$ be a transition system without terminal states. Let $\phi, \phi_1, \phi_2$ be simple EOL formulae, let $\psi, \psi_1, \psi_2$ be complex EOL formulae, let $\theta$ be a G-complex EOL formula, and let $\xi $ be an infinite EOL formula. Let $\mathcal{A}$ be a set of event variables and let $a_{\alpha}, a_{\alpha_i}$ range over arbitrary event variables in $\mathcal{A}$.

The satisfiability of EOL formulae ($\vDash_e$) is defined over execution traces $\sigma = s_0 \alpha_1 s_1 \alpha_2 \ldots$ in $T$.

For \emph{simple EOL} formulae we define:
\begin{itemize}
\item $\sigma \vDash_e \top$, i.e., $\top$ (true) is trivially satisfied by all traces
\item $\sigma \vDash_e a_{\alpha}$ iff $\exists 0 < r: \sigma[0..r] \vDash_e a_{\alpha}$ iff $\exists 0 < j \le r : s_{j-1} \xrightarrow{\alpha} s_{j}$
\item $\sigma \vDash_e \lnot \phi$ iff not $\sigma \vDash_e \phi$
\item $\sigma \vDash_e \phi_1 \land \phi_2$ iff $\exists 0< r:\sigma[0..r]  \vDash_e \phi_1 \land \phi_2$ iff $\exists 0 < r: \sigma[0..r] \vDash_e \phi_1$ and $\sigma[0..r] \vDash_e \phi_2$
\item $\sigma \vDash_e \phi_1 \lor \phi_2$ iff $\exists 0< r:\sigma[0..r]  \vDash_e \phi_1 \lor \phi_2$ iff $\exists 0 < r: \sigma[0..r] \vDash_e \phi_1$ or $\sigma[0..r] \vDash_e \phi_2$
\end{itemize}

For \emph{I-complex EOL} formulae we define:
\begin{itemize}
\item $\sigma \vDash_e \psi_1 \ {\dland} \  \psi_2$ iff $\exists 0  < r : \sigma[0..r] \vDash_e \psi_1 \ {\dland} \  \psi_2$ iff\\
$\exists 0 < j \le k < r : \sigma[0..j] \vDash_e \psi_1$ and $\sigma[k..r] \vDash_e \psi_2$ 
\item $\sigma \vDash_e  \psi_1 \ {\dland}_{<} \ \phi \  {\dland}_{>}  \ \psi_2$ iff $\exists 0  < r : \sigma[0..r] \vDash_e \psi_1 \ {\dland}_{<} \ \phi \  {\dland}_{>}  \ \psi_2$ iff\\
$\exists 0  < j \leq k < r : \sigma[0..j] \vDash_e  \psi_1$ and $\sigma[k..r] \vDash_e  \psi_2$ and $\forall l$ s.t. $j \leq l < k: \sigma[l .. l+1] \vDash_e \phi$
\item $\sigma \vDash_e \psi_1 \land \psi_2$ iff $\exists 0 < r: \sigma[0..r] \vDash_e \psi_1 \land \psi_2$ iff $\exists 0  < r :\sigma[0..r] \vDash_e \psi_1$ and  $\sigma[0..r] \vDash_e \psi_2$
\item $\sigma \vDash_e \psi_1 \lor \psi_2$ iff $\exists 0 < r : \sigma[0..r] \vDash_e \psi_1 \lor \psi_2$ iff $\exists 0  < r :\sigma[0..r] \vDash_e \psi_1$ or  $\sigma[0..r] \vDash_e \psi_2$
\end{itemize}

For \emph{G-complex EOL} formulae we define:
\begin{itemize}
\item $\sigma[i..r] \vDash_e  \phi \  {\dland}_{]} \  \theta$ iff $\exists i \leq j < r : \sigma[j..r] \vDash_e \theta$ and $\forall k$ s.t. $i \leq k < j: \sigma[k..k+1] \vDash_e \phi$
\item $\sigma[i..r] \vDash_e \theta \  {\dland}_{[} \phi$ iff $\exists i < j \leq r: \sigma[i..j] \vDash_e \theta$ and $\forall k$ s.t. $j \leq k < r: \sigma[k..k+1] \vDash_e \phi$
\item $\sigma \vDash_e  \phi \  {\dland}_{]} \  \theta$ iff $\exists 0 \leq r : \sigma[r..] \vDash_e \theta$ and $\forall j$ s.t. $0 \leq j < r: \sigma[j..j+1] \vDash_e \phi$
\item $\sigma \vDash_e \theta \  {\dland}_{[} \phi$ iff $\exists 0  < r : \sigma[0..r] \vDash_e \theta$ and $\forall j$ s.t. $j \geq r: \sigma[j..j+1] \vDash_e \phi$
\end{itemize}

Let $\sigma= s_0 \alpha_1 s_1 \alpha_2 \ldots s_{l} \alpha_{l+1} s_{l+1} \ldots \alpha_{l+m} s_{l+m} \alpha_{l+m+1} s_{l+1} \alpha_{l+2} \ldots$ be an infinite execution trace of $T$ with a loop consisting of $m$ states and starting with $s_{l+1}$. Let $\sigma_2^j = \sigma[l..l+j*m-1]$ be the unfolding of the loop for $j$ times.

For infinite EOL formulae $\xi = \theta \dlandomega \psi$ we define:
\begin{itemize}
\item $\sigma \vDash_e \xi$ iff $ \exists i \geq 0, j \geq 0 : \sigma_1 = \sigma[0..i]$ and $\sigma_2^j = \sigma[l..l+j*m-1]$ and
$\sigma_1 \vDash_e \theta$ and $\sigma_2^j \vDash_e \psi$.
\end{itemize}
\end{definition}


\begin{definition}[EOL Formulae over Executions] 
Let $\sigma= s_0 \alpha_1 s_1 \alpha_2 \ldots s_{l} \alpha_{l+1} \ldots s_{l+m-1} \alpha_{l+m} s_{l} \alpha_{l+1} \ldots$ be an infinite execution trace of $T$ with a loop consisting of $m$ states and starting with $s_{l}$. The EOL formula over $\sigma$ is defined as: $
\xi_{\sigma} := a_{\alpha_1} \dland \ldots \dland a_{\alpha_{l}} \dlandomega (a_{\alpha_{l+1}} \dland \ldots \dland a_{\alpha_{l+m}}).
$
\end{definition}

The following definition will give us a possibility of comparing two EOL formulae. Whenever we have an EOL formula $\xi_1$ and extend it to an EOL formula $\xi_2$ by adding some events to the formula, the set of executions that satisfy the EOL formula $\xi_2$ will be a subset of those executions that satisfy the EOL formula $\xi_1$. Intuitively, this holds as in  $\xi_2$ we have more constraints on the represented execution traces.
Therefore, for two infinite EOL formulae $\xi_1$ and $\xi_2$ we will use the notation  $\xi_1 \subseteq \xi_2$ to express that every execution $\sigma$ that satisfies $\xi_2$ also satisfies $\xi_1$. In that case, $\xi_1$ can be seen as a generalized form of $\xi_2$.
\begin{definition}[EOL Formulae Subset Relationship]\label{def:form-subset-rel} Let $\xi_1 and \xi_2$ be infinite EOL formulae. 
\begin{itemize}
\item $\subseteq: \xi_1 \subseteq \xi_2$ iff  every execution $\sigma$ that satisfies $\xi_2$ also satisfies $\xi_1$. Intuitively, this means that the set of events in $\xi_1$ is a subset of the events in $\xi_2$.
\item $ \subset :\xi_1 \subset \xi_2$ iff  $\xi_1\subseteq \xi_2$  and  $\xi_1 \neq \xi_2$.
\end{itemize}
\end{definition}
As an example we consider the EOL formulae $\xi_1 = E0 \dland  B1 \dland E1$ and $\xi_2 = E0 \dland B1 \dland B2 \dland E1$.
In every execution $\sigma$ that satisfies $\xi_2$, the events $E0$, $B1$ and $E1$ will happen one after the other (but there can be other events happening between them). Therefore, every execution $\sigma$ that satisfies $\xi_2$ also satisfies $\xi_1$. Hence, it holds that $\xi_1 \subseteq \xi_2$.

\section{Causality for general LTL-definable properties}\label{sec:cause}
 
In this section, we formally define the notion of actual causality (AC) for general LTL-definable properties.
The definition follows its counterpart in~\cite{DBLP:conf/vmcai/Leitner-FischerL13}.
The latter is an adoption of the actual causality in~\cite{halpern2005causes}, to the context of concurrent systems.
Next, we provide a brief reminder of the causal setting in~\cite{halpern2005causes}.

In~\cite{halpern2005causes}, systems under analysis are formalized as structural equation models.
Intuitively, structural equations are used to describe causal influence
of variables in the system.
The set of all variables is partitioned into the set $U$ of \emph{exogenous} 
variables that are irrelevant with respect to the causal effect, and the set $V$ of \emph{endogenous} variables that are considered to have a meaningful, 
potentially causal effect. The set $X \subseteq V$ contains all events
that jointly might represent a cause.
A signature $\cal{S}$ is defined as a tuple
$\cal{(U,V,R)}$, where $\cal{U}$ is a finite set of exogenous variables,
$\cal{V}$ is a finite set of endogenous variables, and $\cal{R}$
associates with every variable $Y \in \cal{U} \cup \cal{V}$ 
a nonempty set ${\cal{R}}(Y)$ of possible values for Y. 
A structural equation model over a signature $\cal{S}$ is defined in~\cite{halpern2005causes} as
tuple $M = (\cal{S,F})$, where $\cal{F}$ associates 
with each variable $X \in \cal{V}$ a function denoted
$F_{X}$ that defines the values of all variables in X
given the values of all other variables in $\cal{U} \cup \cal{V}$.
Consider a structural equation model $M = (\cal{S,F})$, a vector 
$\vec{X}$ of variables in $\cal{V}$, and vectors $\vec{x}$ and $\vec{u}$
of values for the variables in $\vec{X}$ and $\cal{U}$.
$M_{\vec{X} \leftarrow \vec{x}}$ denotes the structural equation model for which variables in $\vec{X}$ are set to $\vec{x}$. 
Given a signature $\cal{S = (U, V, R)}$, 
a formula of the form $X = x$, for $X \in \cal{V}$ and $x \in {\cal{R}}(X)$, 
is called a primitive event. A basic causal formula over $\cal{S}$ is one 
of the form $[Y_{1} \leftarrow y_{1}, ... , Y_{k} \leftarrow y_{k}, ]\varphi$
where $\varphi$ stands for the effect, or hazard, and 
$Y_{1}, ... Y_{k}$ and X are variables in $\cal{V}$. 
The formula $[Y_{1} \leftarrow y_{1}, ... , Y_{k} \leftarrow y_{k}, ]\varphi$ 
is abbreviated as $[\vec{Y} \leftarrow \vec{y}]\varphi$.
Intuitively, $[\vec{Y} \leftarrow \vec{y}]\varphi$ states that $\varphi$
holds in a setting in which the values 
of the variables in $\vec{Y}$ are set to the values in $\vec{y}$.
A causal formula $\psi$ is a Boolean combination of basic causal formulae. We write $(M, \vec{u}) \models_{SM} \psi$ whenever
$\psi$ is true in the structural 
model $M$, given the context defined by $\vec{u}$. 
Additionally, $\vec{X} = \vec{x}$ stands for a conjunction of primitive events of the form
$X_{1} = x_{1} \wedge .... \wedge X_{k} = x_{k}$.

An actual cause with respect to the hazard, or effect $\varphi$ is defined in~\cite{halpern2005causes} as follows: 

\begin{definition}[Actual cause~\cite{halpern2005causes}]\label{def:cause-halpern-pearl}
$\vec{X} = \vec{x}$ is an actual cause of $\varphi$ in $(M,\vec{u})$ if the following 
three actual cause conditions (AC) hold:
\begin{description}
\item[AC1:]
$(M,\vec{u}) \models_{SM} (\vec{X} = \vec{x}) \wedge \varphi $.

\item[AC2:]
There exists a partition $(\vec{Z}, \vec{W})$ of $\cal{V}$ 
with $\vec{X} \subseteq \vec{Z}$  and some setting $(\vec{x},\vec{w})$
of the variable in $(\vec{X},\vec{W})$ such that: 
\begin{enumerate}
\item $(M,\vec{u}) \models_{SM} [\vec{X} \leftarrow \vec{x}', \vec{W} \leftarrow \vec{w}']\neg\varphi$

\item $(M,\vec{u}) \models_{SM} [\vec{X} \leftarrow \vec{x}, \vec{W} \leftarrow \vec{w}',  \vec{Z}' \leftarrow \vec{z}^{*}]\varphi$ for all subsets $\vec{Z}'$ of $\vec{Z}$
\end{enumerate}

\item[AC3:] 
$\vec{X}$ is minimal, in the sense that no subset of $\vec{X}$ satisfies conditions 
AC1 and AC2.
\end{description}
\end{definition}

Intuitively, in~\cite{halpern2005causes}, condition AC1 states that there is a setting in which both the cause and the effect occur.
AC2(1) expresses a necessity condition.
It says that for
$\vec{X} = \vec{x}$ to be a cause of $\varphi$, there must be a setting $\vec{x'}$ such that
if $\vec{X}$ is set to $\vec{x}'$, $\varphi$ would not have occurred.
However, as stated in~\cite{halpern2005causes}, 
AC2(1) might be too permissive as
it allows to change the values of the variables in both $X$ and $W$. 
Hence, the change of $\varphi$ form true to false 
could be caused by a change of a variable in $X$ or $W$.
The set $W$ enables expressing so-called ``contingent dependencies''.
For an intuition, consider two events ``Alice presses button B2'' and ``Bob presses button B2'' that enable the elevator to reach the second floor of a building.
We say that the elevator reaching the second floor depends on Alice pressing the button, under the contingency that Bob did not press the button.
AC2(2) constrains AC2(1) by keeping the values
of the variables in $X$ at their original values and only changing the variables 
in $W$. AC2(2) corresponds to a sufficiency condition. Intuitively, setting $\vec{X}$ to $\vec{x}$, guarantees that $\varphi$ holds. The minimality condition in AC3 ensures that only those elements that are essential with respect to $\varphi$ are part of the cause.

Actual causality in the context of TS's and LTL-definable properties is defined as an adoption of~\cite{halpern2005causes} to the setting of concurrent systems, in the spirit of~\cite{DBLP:conf/vmcai/Leitner-FischerL13}. In our setting, $\neg \varphi$ represents the hazard, or the effect.

\begin{definition} [Causality for LTL]\label{def:new-sem}
Let $T = (S, Act,\rightarrow ,I,AP,L)$ be a transition system without terminal states. An EOL formula $\xi$ is considered a cause for the violation of the LTL specifiable property $\varphi$, if the following conditions are satisfied:
 
 \begin{itemize}
\item AC1: There exists an infinite execution $\sigma$ in $T$ such that $\sigma \vDash_e \xi$ and $\sigma \nvDash \varphi$


\item AC2(1): There exists an infinite execution $\sigma$ in $T$ such that $\sigma \nvDash_e \xi$ and $\sigma \vDash \varphi$


\item AC2(2): 
For all infinite executions $\sigma''$ in $T$ with $\sigma'' \vDash_e \xi$ 
it holds that  $\sigma'' \nvDash \varphi$.



\item AC3: The EOL formula $\xi$ is minimal, {\it i.e.}, there does not exist an EOL formula $\xi'$ with $\xi' \subset \xi$ that also satisfies conditions AC1 and AC2. 

\end{itemize}
\end{definition}

AC1 above resembles its counterpart in Definition~\ref{def:cause-halpern-pearl} in the sense that it identifies a setting $\sigma$ that satisfies both the cause $\xi$ and the effect $\varphi$.
AC2(1) entails a necessity condition that identifies a setting witnessed by $\neg\xi$ in which the violation of $\varphi$ would not occur. We fully formalise necessity as a completeness result in Theorem~\ref{thm:comp}, Section~\ref{sec:sound-complete}. AC2(2) is a sufficiency result, stating that satisfying the cause $\xi$ is enough to guarantee the violation of $\varphi$.
AC3 is the minimality condition which states that no true subset of $\xi$ satisfies conditions AC1 and AC2. Intuitively, $\xi$ is in the ``most general form possible''.
Moreover, note that our definition of causality does not employ a notion of ``contingency''. This is because our approach to causality checking is based on a complete exploration of the traces within a TS model and enables the explicit identification of all potential causes.

\begin{example}\label{e.g.:non-occurence} If we want to show that $\xi = E_0 \dland B2 \dlandomega (B1 \dland E1 \dland B0 \dland E0)$ is causal with respect to the violation of the LTL property $\varphi = \square (B2 \rightarrow \Diamond E2)$, we need to show that AC1, AC2 and AC3 are fulfilled for $\xi$.

Consider the infinite execution:
\begin{equation}\label{eq:sigma-eg-cause}
\sigma = E_0 B_2 (B_1 E_1 B_0 E_0)^\omega
\end{equation}


Informally,~(\ref{eq:sigma-eg-cause})  states that when at floor $E_0$, after pressing button $B_2$, only alternations of actions press $B_i$ and reach $E_i$ are possible, where $i \in \{1, 2\}$. Note that $E_2$ is never reached, even if $B_2$ was pressed.

Moreover, consider a behaviour in which the elevator stops at floor $E_2$ infinitely many times after $B_2$ being pressed once:
\begin{equation}\label{eq:sigma-eq-cause2}
\sigma' = E_0 B_2 (B_1 E_1 E_2 B_0 E_0)^\omega
\end{equation} 

At this point, we can infer the following:

\begin{itemize}
\item AC1 is satisfied as, for $\sigma$ in~(\ref{eq:sigma-eg-cause}), $\sigma \vDash_e \xi$ and $\sigma \not \vDash \varphi$ hold.
\item AC2(1) holds for $\sigma'' = E_0 (B_2 E_2 B_1 E_1)^\omega
$, for instance.
\item AC2(2) is not fulfilled as, for instance, for $\sigma'$ in~(\ref{eq:sigma-eq-cause2}), $\sigma' \vDash_e \xi$ and $\sigma' \vDash \varphi$ hold.
\end{itemize}

Hence, $\xi$ is not causal, as it does not prohibit the occurrence of $E_2$.
\end{example}
We observe that in order to compute causes for a property violation according to Definition 8 it is not sufficient to start with an execution trace $\sigma$ that is a counterexample for the property $\varphi$, build the EOL formula $\xi_{\sigma}$ over $\sigma$ and generalize it (in the sense of Definition~\ref{def:form-subset-rel}) until it satisfies conditions AC1-AC3.
Example~\ref{e.g.:non-occurence} shows that also the non-occurrence of events can be causal for the violation of a general LTL-property. 
We further introduce a method to compute the events over $\sigma$ whose non-occurrence is causal for the violation of $\varphi$, and encode this information within the cause $\xi$.

We proceed by first defining a valuation function with respect to a set of event variables $\mathcal{M}$. This function maps an execution trace $\sigma$ to the subset of event variables of $\mathcal{M}$ occurring in $\sigma$.
\begin{definition} [Valuation Function]
Given a transition system $T = (S, Act,\rightarrow ,I,AP,L)$ and a finite set of event variables $\mathcal{M} = \{a_{\alpha_1}, ..., a_{\alpha_n}\}$, we define the valuation function $val_{\mathcal{M}}$ as a function on the set of execution traces of $T$ to the set $\mathcal{P(M)}$ -- the powerset of $\mathcal{M}$.
Let $\sigma$ be an execution trace of $T$. Then:
\[
val_{\mathcal{M}}(\sigma) := \{a_{\alpha} \in \mathcal{M} : \sigma \vDash_e a_{\alpha} \}.
\]
\end{definition}

\begin{definition}[Non-Occurrence of Events]
Let $T = (S, Act, \rightarrow, I, AP, L)$ be a transition system without terminal states, $\varphi$ an LTL-definable property, $\sigma$ an execution trace over $T$ with $\sigma \nvDash\varphi$ and $\xi_{\sigma}$ the EOL formula built over $\sigma$.  Let $\mathcal{A} $ be the set of event variables, let $\mathcal{Z}$ be the set of event variables occurring in $\xi_{\sigma}$ and let $\mathcal{W} := \mathcal{A} \backslash \mathcal{Z}$. We say that that $Q$ is the subset of event variables whose non-occurrence in $\sigma$ is causal for the property violation of $\varphi$, if 
\begin{enumerate}
\item $\xi_{\sigma}$ satisfies AC1 and AC2(1).
\item There exists an execution trace $\sigma''$ with $\sigma'' \vDash_e \xi_{\sigma}$,\\
$ val_{\mathcal{Z}}(\sigma) = val_{\mathcal{Z}}(\sigma'')$, $val_{\mathcal{W}}(\sigma) \neq  val_{\mathcal{W}}(\sigma'')$ and $\sigma'' \vDash \varphi$.
\item $Q \subseteq W$ is the smallest set s.t. for all execution traces $\sigma''$ with $\sigma'' \vDash_e \xi_{\sigma}$ and $ val_{\mathcal{Z}}(\sigma) = val_{\mathcal{Z}}(\sigma'')$ and $val_{Q}(\sigma) =  val_{Q}(\sigma'') = \emptyset$ we have $\sigma'' \nvDash \varphi$.
\end{enumerate}
\end{definition}
As above, let $\sigma$ be a counterexample for the property $\varphi$, let $\xi_{\sigma}$ be the EOL formula built over $\sigma$ and assume  $\xi_{\sigma}$ satisfies the first two conditions of the definition above.

We can now compute the subset $Q$ and determine the location of the event variables $a_{\alpha} \in Q$ in the EOL formula $\xi_{\sigma''}$ built over a $\sigma''$ which we choose as in condition 2 above. We then compare $\xi_{\sigma}$ to $\xi_{\sigma''}$ and prohibit the occurrence of $a_{\alpha} $ in $\xi_{\sigma}$ in the same locations as they occur in $\xi_{\sigma''}$.
%
We repeat the procedure for all $\sigma''$ as in 2 above, and build $\xi_{\sigma}$ in an incremental fashion.
This way we obtain a new EOL formula $\xi_{\sigma}$ that also satisfies condition AC2(2).

If, based on Definition~\ref{def:form-subset-rel},  there is a generalization $\xi'$ of $\xi_{\sigma}$ that also satisfies AC1 and AC2, we replace $\xi_{\sigma}$ by $\xi'$. We repeat this procedure until $\xi_{\sigma}$ is in the most general form possible and, therefore, also satisfies AC3. Since $\xi_{\sigma}$ satisfies AC1-AC3 by construction, it is a cause for the violation of $\varphi$. 

In the context of Example~\ref{e.g.:non-occurence}, from traces $\sigma''$ satisfying condition 2, we obtain intermediate formulae $\xi_{\sigma}$ of shape:
\begin{equation}\label{eq:intermediate-causes}
\begin{array}{rr@{}c@{}l}
\sigma'' = E_0  B_2  E_2 (B_1 E_1 B_0 E_0)^{\omega}&\xi_{\sigma} = (E0 \wedgedot B2 \eAfter \lnot E2) & \dlandomega & (B1 \dland E1 \dland B0 \dland E0)\\
\sigma'' = E_0  B_2 (B_1 E_1 E_2 B_0 E_0)^{\omega} &\xi_{\sigma} = (E0 \wedgedot B2 \eAfter \lnot E2) & \dlandomega & (B1 \dland E1 \eBetween{\lnot E2} B0 \dland E0)\\
\sigma'' = E_0  B_2 (B_1 E_1 B_0  E_2E_0)^{\omega}&\xi_{\sigma} = (E0 \wedgedot B2 \eAfter \lnot E2) & \dlandomega & (B1 \eBetween{\lnot E2}  E1 \eBetween{\lnot E2} B0 \dland E0)\\
\sigma'' = E_0  B_2  (B_1 E_1 E_2 B_0 E_0)^{\omega}&\xi_{\sigma} = (E0 \wedgedot B2 \eAfter \lnot E2) & \dlandomega & (B1 \eBetween{\lnot E2}  E1 \eBetween{\lnot E2} B0 \eBetween{\lnot E2} E0)\\
\ldots
\end{array}
\end{equation}

On top of the formula incrementally derived as in~(\ref{eq:intermediate-causes}), the repeated generalisation procedure entails the EOL formula:
\begin{equation}\label{eq:cause-example2}
\xi_{\sigma} = (B2 \eAfter \lnot E2) \dlandomega (\lnot E2)
\end{equation}
which is semantically equivalent with $B2 \wedgedot \lnot E2$.
Observe that $\xi_{\sigma}$ satisfies AC1 for $\sigma$ in~(\ref{eq:sigma-eg-cause}) and AC2.
$\xi_{\sigma}$ also satisfies AC3, because every superset of $\xi_{\sigma}$ will either violate AC1 or AC2. Thus, the EOL formula $\xi_{\sigma}$ satisfies AC1-AC3 .

Conditions AC1-AC3 do not imply that the order of the occurring events is causal.  Whether the order of events occurring in an EOL formula $\xi$ that satisfies AC1-AC3 is causal or not, can be checked by the following Order Condition (OC). Note that $\xi$ can be causal even if the OC is not satisfied. 



\begin{definition}[Order Condition (OC)]\label{def:OC}
Let $T = ( S, \Act, $ $\rightarrow,$ $ I, AP, L)$ be a transition system without terminal states. 
Let $\sigma$ be an infinite execution trace violating an LTL-definable property $\varphi$. Let $\xi$ be the EOL formula built over $\sigma$. Let $\mathcal{Z}$ be the set of event variables occurring in $\xi$ and let $\mathcal{W} := \mathcal{A} \backslash \mathcal{Z}$.
Consider a set of pairs of event variables over $Y \subseteq \cal{Z}$:
\[
\{(a_{\alpha_i}, a_{\alpha_j}) \mid a_{\alpha_i} \wedgedot a_{\alpha_j} \textnormal{ occurs in } \xi\}.
\]
Let $\xi_{\wedge}$ be the formula obtained by replacing the occurrences
$a_{\alpha_i} \wedgedot a_{\alpha_j}$ in $\psi$ with $a_{\alpha_i} \land a_{\alpha_j}$.

The \emph{order condition} (OC) states that the order of events $a_{\alpha_i}$ and $a_{\alpha_j}$ as above is not causal if the following holds:
$
\textnormal{exists } \sigma' \textnormal{ such that } \sigma' \not \vDash \varphi \textnormal{ and } val_{\cal A}(\sigma) = val_{\cal A}(\sigma') \textnormal{ and } \sigma' \not \vDash_e \xi \textnormal{ and } \sigma' \vDash_e \xi_{\land}.
$
\end{definition}

Note that for the EOL formula $\xi_{\sigma}$ in~(\ref{eq:cause-example2}) the following holds: $\xi_{\sigma} = \xi_{\sigma\land}$. As a consequence, the order of events in $\xi_{\sigma}$ is causal. We say that $\xi_{\sigma}$ satisfies OC.

The following result states the equivalence with the original notion of causality in~\cite{DBLP:conf/vmcai/Leitner-FischerL13}, for the case of safety LTL properties.

\begin{corollary}
Let $T = ( S, \Act, $ $\rightarrow,$ $ I, AP, L)$ be a transition system without terminal states. 
Let $\varphi$ be a safety LTL property. A G-complex EOL formula $\theta$ is a cause in the sense of Definition~\ref{def:new-sem} if and only if it is a cause in the sense of~\cite{DBLP:conf/vmcai/Leitner-FischerL13}.
\end{corollary}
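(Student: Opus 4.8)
The plan is to exploit the standard correspondence between finite safety counterexamples and their stutter-extended infinite lassos, and then to verify that each of the conditions AC1--AC3 transfers across this correspondence. First I would make precise how a G-complex formula $\theta$ is to be read as a cause in the sense of Definition~\ref{def:new-sem}: since $\varphi$ is a safety property, its infinite counterexamples are exactly the lassos whose loop is the trivial self-loop $s_\lambda \xrightarrow{\lambda} s_\lambda$ introduced by the terminal-state extension in the Preliminaries. Hence the only infinite cause carrying the information of $\theta$ is $\xi_\theta := \theta \dlandomega a_\lambda$, and the statement is to be understood as: $\theta$ is a cause in~\cite{DBLP:conf/vmcai/Leitner-FischerL13} iff $\xi_\theta$ is a cause in the sense of Definition~\ref{def:new-sem}.

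The technical core is two correspondence lemmas. The first is a \emph{satisfaction lemma} for the property: for a finite trace $\sigma_{\mathrm{fin}}$ and its stutter extension $\sigma = \sigma_{\mathrm{fin}}\,(\lambda)^\omega$, one has $\sigma \nvDash \varphi$ iff $\sigma_{\mathrm{fin}}$ is a finite counterexample for $\varphi$ in the sense of~\cite{DBLP:conf/vmcai/Leitner-FischerL13}. This rests on the bad-prefix characterisation of safety properties (every violation is witnessed by a finite prefix that cannot be completed to a model of $\varphi$) together with the fact that $\lambda$ and $a_\lambda$ are fresh symbols not occurring in $\varphi$, so that appending the $\lambda$-loop neither creates nor destroys a violation. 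The second is an \emph{EOL lemma}: by the expressive-power equivalence of G-complex and complex formulae noted after Definition~\ref{def:EOL-syntax}, together with the infinite-EOL semantics (where $\sigma \vDash_e \theta \dlandomega a_\lambda$ reduces to the existence of a finite prefix $\sigma_1 = \sigma[0..i]$ with $\sigma_1 \vDash_e \theta$, the loop part $a_\lambda$ being satisfied by any unfolding of the $\lambda$ self-loop), I would show $\sigma_{\mathrm{fin}} \vDash_e \theta$ in the old finite semantics iff $\sigma \vDash_e \xi_\theta$.

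With these two lemmas the equivalence of the individual conditions becomes a direct translation. AC1 and AC2(1) are existential statements over executions, and the pair of lemmas turns an old witness $\sigma_{\mathrm{fin}}$ into a new witness $\sigma$ and conversely; AC2(2) is universal and transfers by the same correspondence applied contrapositively; and AC3 reduces to minimality of $\theta$ under $\subset$, which coincides in the two frameworks because the fixed loop part $a_\lambda$ is never a genuine event and is therefore never subject to generalisation, so that a strict subset of $\xi_\theta$ is exactly $\xi_{\theta'}$ for a strict subset $\theta' \subset \theta$.

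I expect the main obstacle to be the satisfaction lemma and, relatedly, the bookkeeping around the fresh stutter action. One must argue carefully that the stutter extension is faithful in both directions: that a good finite prefix really does extend to an infinite model of the safety property (so that AC2(1) witnesses survive), and that the labelling of $s_\lambda$ can be fixed once and for all without affecting $\varphi$ while remaining consistent with the EOL semantics of $a_\lambda$. A secondary subtlety lies in AC3: I must check that the minimisation in Definition~\ref{def:new-sem}, which ranges over arbitrary infinite EOL formulae $\xi'$, cannot yield a smaller cause whose loop part differs from $a_\lambda$. This again follows because any such $\xi'$ would have to be satisfied by the $\lambda$-lasso counterexamples, forcing its loop part to be equivalent to $a_\lambda$, but it is the step that requires the most care.
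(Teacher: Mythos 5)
Your proposal rests on a reinterpretation of the statement that the paper neither makes nor needs, and that reinterpretation introduces a genuine error. You translate the G-complex formula $\theta$ into the infinite formula $\xi_\theta := \theta \dlandomega a_\lambda$, justified by the claim that the infinite counterexamples of a safety property ``are exactly the lassos whose loop is the trivial self-loop $s_\lambda \xrightarrow{\lambda} s_\lambda$.'' That claim is false: by the bad-prefix characterisation you yourself invoke, \emph{every} infinite extension of a bad prefix violates $\varphi$, including executions that loop through genuine actions and never reach a $\lambda$-sink (and in a transition system that natively has no terminal states, the action $\lambda$ and the variable $a_\lambda$ need not exist at all, so $\xi_\theta$ is not even well-formed over the event alphabet). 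This is not a cosmetic issue, because it changes the quantification domain of AC2. In Definition~\ref{def:new-sem}, AC2(2) ranges over all infinite executions $\sigma''$ with $\sigma'' \vDash_e \xi$; only $\lambda$-lassos satisfy your $\xi_\theta$, so your AC2(2) is blind to any execution that satisfies $\theta$ and then continues forever with good, non-terminating behaviour. On such a model, $\xi_\theta$ passes your sufficiency check while $\theta$ fails sufficiency in the sense of~\cite{DBLP:conf/vmcai/Leitner-FischerL13}, since a finite prefix witnessing $\theta$ there need not be a bad prefix. Your ``contrapositive transfer'' of AC2(2) therefore breaks precisely where the correspondence matters, and the secondary AC3 argument (forcing the loop part of any smaller $\xi'$ to be $a_\lambda$) inherits the same defect.

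The missing idea is that no wrapping into an infinite formula is required in the first place. Definition~\ref{def:EOL-sem} gives satisfaction clauses for G-complex formulae both on finite segments $\sigma[i..r]$ and directly on infinite traces $\sigma$, so $\theta$ itself can play the role of $\xi$ in Definition~\ref{def:new-sem}. The paper's proof hinges on the single observation that the semantics of G-complex formulae restricted to finite traces coincides with the original EOL semantics of~\cite{DBLP:conf/vmcai/Leitner-FischerL13}; given that, the equivalence follows by a case analysis in which AC1 and AC3 are literally identical in the two frameworks, and the AC2 conditions imply one another in both directions (new AC2(1) and AC2(2) yield old AC2(1); new AC2(2) yields old AC2(2); the old pair yields both new conditions). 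Note also that even your condition-by-condition ``direct translation'' of AC2 is too optimistic: the paper does not claim each AC2 clause transfers independently, but only the joint implications just listed, reflecting that the old AC2 conditions are not of the same logical shape as the new ones.
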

\begin{proof}[Proof Sketch.]
First, recall that counterexamples witnessing the violation of safety properties are finite.
Hence, in~\cite{DBLP:conf/vmcai/Leitner-FischerL13}, the satisfiability of EOL formulae (characterising such counterexamples) was established based on finite traces.
Nevertheless, as can be seen from Definition~\ref{def:EOL-sem}, satisfiability of G-complex formulae can be defined via finite traces as well. In fact, the semantics of G-complex formulae and EOL formulae as in~\cite{DBLP:conf/vmcai/Leitner-FischerL13} coincide.
These being said, the equivalence of the two notions of causality in the context of safety properties
follows immediately by case analysis on AC1--AC3.
AC1 and, respectively, AC3 in Definition~\ref{def:new-sem} are identical to their counterparts in~\cite{DBLP:conf/vmcai/Leitner-FischerL13}.
From AC2(1) and AC2(2) in Definition~\ref{def:new-sem} we can infer AC2(1) in~\cite{DBLP:conf/vmcai/Leitner-FischerL13}. AC2(2) in Definition~\ref{def:new-sem} implies AC2(2) in~\cite{DBLP:conf/vmcai/Leitner-FischerL13}, whereas AC2(1) and AC2(2) in~\cite{DBLP:conf/vmcai/Leitner-FischerL13} imply their counterparts in~Definition~\ref{def:new-sem}.
\end{proof}

We further introduce a definition of causality classes for general LTL-properties. Intuitively, causality classes can be interpreted as ``generalized counterexamples''.

\begin{definition}[Causality Class]
Let  $T = (S, Act,\rightarrow ,I,AP,L)$ be a transition system without terminal states and let $\varphi$ be a general LTL formula. Every infinite EOL formula $\xi =  \theta $ ${\dland}^{\omega}$ $\psi$ that is considered a cause for the violation of $\varphi$, {\it i.e.}, every infinite EOL formula $\xi$ that satisfies AC1-AC3 and OC, defines a {\it causality class} $ CC_{\xi}.$ $CC_{\xi} $ is defined as the set of all valid execution traces in $T$ that satisfy $\xi$.
\end{definition}

For example, the EOL formula $\xi_{\sigma} = (B2 \eAfter \lnot E2) \dlandomega (\lnot E2)$ in~(\ref{eq:cause-example2}) satisfies AC1--AC3 and OC and, therefore, defines a causality class. 
Moreover, note that one execution can belong to more than one causality class.


\section{Completeness and Soundness}\label{sec:sound-complete}

We say that causality checking is \emph{complete} whenever for each possible execution trace that violates an LTL property in the transition system under analysis, there exists a causality class representing this trace.
Therefore, completeness can be seen as a necessity condition.
The completeness of causality checking depends on a complete enumeration of all bad 
and good traces in the system. 

\begin{theorem}[Completeness]\label{thm:comp}
Let $T = ( S, \Act, $ $\rightarrow,$ $ I, AP, L)$ be a transition system without terminal states. 
Let $\sigma$ range over infinite execution traces in $T$ violating an LTL-definable property $\varphi$, {\it i.e.}, $\sigma \nvDash \varphi$.
For each such $\sigma$ there exists a causality class of $\varphi$ containing this trace. 
\end{theorem}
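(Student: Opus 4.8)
The plan is to give a \emph{constructive} proof: for each infinite counterexample $\sigma$ with $\sigma \nvDash \varphi$ I will exhibit an infinite EOL formula $\xi^{*}$ that satisfies AC1--AC3 (and OC) and for which $\sigma \vDash_e \xi^{*}$, so that $\sigma \in CC_{\xi^{*}}$. Since every infinite counterexample of an LTL property in a finite transition system can be represented in the $uv^{\omega}$ (lasso) form described in the introduction, I may assume $\sigma$ is lasso-shaped with a loop of $m$ states, so that the EOL formula $\xi_{\sigma}$ built over $\sigma$ is well-defined. By the Definition of EOL formulae over executions together with the semantics in Definition~\ref{def:EOL-sem}, unfolding the loop sufficiently often yields the split $\sigma_1,\sigma_2^{j}$ witnessing $\sigma \vDash_e \xi_{\sigma}$; since also $\sigma \nvDash \varphi$, the formula $\xi_{\sigma}$ already satisfies AC1 with $\sigma$ itself as witness. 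The entire construction will keep $\sigma$ as a model, which is exactly what guarantees membership of $\sigma$ in the final causality class.

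Next I would run the strengthening procedure induced by the Definition of non-occurrence of events. Writing $\mathcal{Z}$ for the event variables in $\xi_{\sigma}$ and $\mathcal{W} = \mathcal{A}\setminus\mathcal{Z}$, for every good trace $\sigma''$ with $\sigma'' \vDash_e \xi_{\sigma}$, $val_{\mathcal{Z}}(\sigma)=val_{\mathcal{Z}}(\sigma'')$ and $\sigma'' \vDash \varphi$, I insert negated event variables $\lnot a_{\alpha}$ (for $a_{\alpha}\in Q$) into $\xi_{\sigma}$ at exactly the positions where they occur in $\xi_{\sigma''}$, using the $\eBetween{\cdot}$ and $\eAfter$ operators, precisely as in the incremental derivation~(\ref{eq:intermediate-causes}). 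The decisive observation is that each constraint $\lnot a_{\alpha}$ added this way records a \emph{non-occurrence} that is already true of $\sigma$ (the events of $Q$ are absent from the relevant intervals of $\sigma$), so strengthening never rejects $\sigma$: after each insertion $\sigma \vDash_e \xi_{\sigma}$ still holds. Iterating over all such $\sigma''$ (finitely many up to the relevant $\mathcal{W}$-valuations, since $\mathcal{A}$ is finite) produces a formula that additionally satisfies AC2(2), while AC2(1) is maintained because the enumeration of good traces supplies a $\sigma''$ with $\sigma'' \vDash \varphi$ and $\sigma'' \nvDash_e \xi_{\sigma}$.

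Finally I would generalise. By Definition~\ref{def:form-subset-rel}, replacing $\xi_{\sigma}$ by a more general $\xi'$ with $\xi' \subset \xi_{\sigma}$ can only \emph{enlarge} the set of satisfying executions; hence $\sigma \vDash_e \xi_{\sigma}$ implies $\sigma \vDash_e \xi'$, and $\sigma$ survives every generalisation step. Repeatedly removing events and order constraints that are not essential for AC1 and AC2 --- again a terminating process, because $\mathcal{A}$ is finite and each step strictly decreases the multiset of event occurrences in the formula --- drives $\xi_{\sigma}$ into its most general form $\xi^{*}$, which therefore satisfies AC3 and whose residual ordered conjunctions can be checked against OC as in the discussion following~(\ref{eq:cause-example2}). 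Since $\xi^{*}$ satisfies AC1--AC3 (and OC) and $\sigma \vDash_e \xi^{*}$, the class $CC_{\xi^{*}}$ is a causality class of $\varphi$ containing $\sigma$.

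The main obstacle I anticipate is the interaction between the two opposite movements --- strengthening shrinks the model set while generalisation enlarges it --- and showing that $\sigma$ is preserved throughout and that the alternating loop terminates at a formula satisfying \emph{all} of AC1--AC3 simultaneously rather than oscillating. Making the invariant ``$\sigma \vDash_e \xi_{\sigma}$, and $\xi_{\sigma}$ satisfies AC1 and AC2(1)'' precise and stable under both operations is the crux. A secondary subtlety is the necessity clause AC2(1): it presupposes the existence of a good trace distinguishable from $\sigma$, so the degenerate case in which \emph{every} trace violates $\varphi$ (where generalisation might collapse $\xi^{*}$ to $\top$) must be treated separately or excluded, consistent with the standing assumption that completeness rests on a complete enumeration of both bad and good traces.
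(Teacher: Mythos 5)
Your proposal is correct and takes essentially the same route as the paper: the paper likewise starts from $\xi_{\sigma}$ (giving AC1 with $\sigma$ as witness), repairs AC2(2) by iteratively prohibiting non-occurring events at the locations indicated by each good trace $\sigma''$ --- preserving $\sigma \vDash_e \xi'_{\sigma}$ and terminating because $\Act$ is finite --- and handles AC3 via the subset relation of Definition~\ref{def:form-subset-rel}, under which $\sigma$ survives generalisation. The only cosmetic difference is that the paper wraps this construction in a proof by contradiction (assuming $\sigma$ lies in no causality class), and it is no more rigorous than you are about the two subtleties you flag, namely the termination of the strengthening/generalisation alternation and the reliance of AC2(1) on the enumerated good traces.
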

\begin{proof}[Proof]
Let $\varphi$ be a general LTL property and let $\xi =  \xi_1 \lor ... \lor \xi_n$ be the disjunction of all EOL formulae $\xi_i$ that satisfy AC1--AC3 and OC. Let $\sigma$ be a trace such that $\sigma \not \vDash \varphi$. We have to show that $\sigma \in CC_{\xi_i}$ for some $i \in 1, \ldots, n$. Assume that $\sigma$ is not contained in any causality class, that is $\sigma \nvDash_e \xi_i$ for all EOL formulae $\xi_i$ that satisfy the conditions AC1--AC3 and OC.

Let $\xi_{\sigma}$ be the EOL formula representing $\sigma$, and let $\mathcal{Z}$ and $\mathcal{W}$ be the corresponding event variable partitioning, with respect to $\Act$.
Since $\sigma \vDash_e \xi_{\sigma}$ it follows that $\xi_{\sigma}$ is excluded from $\xi$ by one of the AC1--AC3 tests. We will show that this is not the case for any of the conditions AC1--AC3 or OC.
\begin{itemize}
\item AC1 is satisfied because for $\sigma$ it holds that $\sigma \vDash_e \xi_{\sigma}$ and $\sigma \nvDash \varphi$.

\item AC2(1) holds given the assumption that there exist $n$ EOL formulae in $\xi$ that satisfy AC1--AC3 and OC.

\item If AC2(2) fails, then there exists an infinite execution $\sigma''$ with $\sigma'' \vDash_e \xi_{\sigma}$ (and, hence, $ val_{\mathcal{Z}} (\sigma) = val_{\mathcal{Z}} (\sigma'')$) but $\sigma'' \vDash \varphi$.
Let $\xi_{\sigma''}$ be the EOL formula derived from $\sigma''$.
We can now transform $\xi_{\sigma}$ to a new formula ${\xi'_{\sigma}}$ by prohibiting the occurrence of $a_{\alpha} $ in $\xi_{\sigma}$, in the same locations as they occur in $\xi_{\sigma''}$. Consequently, we still have $\sigma \vDash_e {\xi'_{\sigma}}$ and $\sigma \vDash {\xi'_{\sigma}}$ but now,  $\sigma'' \nvDash_e {\xi'_{\sigma}}$. Thus, $\sigma''$ does not influence the satisfaction of AC2(2) by ${\xi'_{\sigma}}$. 
If ${\xi'_{\sigma}}$ still doesn't satisfy AC2(2), {\it i.e.}, if there is another $\sigma'''$ with $\sigma''' \vDash_e \xi $ and $ val_{\mathcal{Z}} (\sigma) = val_{\mathcal{Z}} (\sigma''')$ but $\sigma''' \vDash \varphi$, we repeat the procedure from above. Since the action alphabet is finite and the EOL formulae are finitely representable, this procedure will stop after finitely many steps. The resulted EOL formula ${\xi'_{\sigma}}$ satisfies AC2(2).

\item If AC3 excludes $\xi_{\sigma}$, then there must be some $ {\xi'_{\sigma}}  \subset \xi_{\sigma}$ that satisfies AC1 and AC2. We still have $\sigma \vDash {\xi'_{\sigma}}$ by Definition~\ref{def:new-sem} . 

\end{itemize}
In all cases we obtain an EOL formula ${\xi'_{\sigma}}$ that satisfies AC1-AC3 and OC, such that $\sigma \vDash {\xi'_{\sigma}}$. Thus, $\sigma$ is contained in the causality class $CC_{\xi'_{\sigma}}.$
\end{proof}

We define a causality checking result to be \emph{sound} if whenever the events described by a causality class 
occur, the property violation occurs.

\begin{theorem}[Soundness]\label{them:sound}
Let $T = ( S, \Act, $ $\rightarrow,$ $ I, AP, L)$ be a transition system without terminal states.
Each execution trace $\sigma$ of $T$ contained in a causality class of a general LTL property $\varphi$ is a bad trace, {\it i.e.} $\sigma \nvDash \varphi$. 
\end{theorem}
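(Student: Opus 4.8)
The plan is to prove soundness directly from the definition of a causality class. Recall that a causality class $CC_{\xi}$ is defined as the set of all valid execution traces that satisfy the EOL formula $\xi$, where $\xi$ is an infinite EOL formula that has been certified as a cause, i.e.\ $\xi$ satisfies conditions AC1--AC3 and OC. The key observation is that among these conditions, AC2(2) is precisely the sufficiency statement we need: it asserts that \emph{for all} infinite executions $\sigma''$ in $T$ with $\sigma'' \vDash_e \xi$ it holds that $\sigma'' \nvDash \varphi$.

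First I would take an arbitrary execution trace $\sigma$ of $T$ that is contained in some causality class of $\varphi$. By the definition of causality class, this means there exists an infinite EOL formula $\xi$ satisfying AC1--AC3 and OC such that $\sigma \in CC_{\xi}$, and hence $\sigma \vDash_e \xi$. Since $\xi$ is a cause in the sense of Definition~\ref{def:new-sem}, it in particular satisfies AC2(2). Instantiating AC2(2) with the witness $\sigma'' := \sigma$ — which is legitimate because $\sigma$ is an infinite execution of $T$ with $\sigma \vDash_e \xi$ — immediately yields $\sigma \nvDash \varphi$. This is exactly the claim that $\sigma$ is a bad trace.

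The argument is therefore essentially a one-line specialisation of the universally quantified sufficiency condition AC2(2), and I would keep the presentation correspondingly short. The only point requiring a word of care is to confirm that the execution $\sigma$ drawn from the causality class is indeed of the right form to serve as an instance in AC2(2): by the definition of $CC_{\xi}$ it is a valid infinite execution trace of $T$ satisfying $\xi$ under $\vDash_e$, which is precisely the hypothesis quantified over in AC2(2). I do not anticipate a genuine obstacle here; unlike completeness (Theorem~\ref{thm:comp}), soundness does not require constructing or refining any formula, only reading off the sufficiency direction already built into the definition of a cause. If one wanted to be fully explicit, one could note that any causality class arises from some certified $\xi$ and that membership is defined via $\vDash_e$, so no trace can belong to a class without triggering AC2(2) and thereby violating $\varphi$.
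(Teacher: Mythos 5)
Your proposal is correct and follows exactly the paper's own argument: both proofs take $\sigma \in CC_{\xi}$, note that $\xi$ by definition satisfies AC1--AC3 and OC, and instantiate the universally quantified sufficiency condition AC2(2) with $\sigma$ itself (since $\sigma \vDash_e \xi$ by the definition of causality classes) to conclude $\sigma \nvDash \varphi$. Nothing is missing, and your brief remark that the instantiation is legitimate because $\sigma$ is an infinite execution of $T$ is a fair point of care the paper leaves implicit.
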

\begin{proof}[Proof]
Let  $\sigma$ be contained in the causality class $CC_{\xi}$ for some EOL formula $\xi$. Since $\xi$ defines a causality class, it must, by definition, satisfy AC1-AC3 and OC. In particular, $\xi$ must satisfy AC2(2). Since  $\sigma \in  CC_{\xi}$ we have $\sigma \vDash_e \xi$ by definition of causality classes. It follows  from AC2(2) that $\sigma\nvDash \varphi$.
\end{proof}

\section{Conclusions}\label{sec:conc}
We have presented an approach for extending causality checking towards general
LTL-definable properties. To this end, we have reconsidered the actual cause conditions AC1-AC3 and OC,
and adapted them to the lasso-shaped counterexamples that general LTL properties entail. 

For practical reasons related to the implementation of the causality checking procedure, our current results are limited to LTL-definable properties. Nevertheless, in the future, we consider extending the formal framework of causality checking to the more general case of $\omega-$ regular linear-time properties~\cite{DBLP:books/daglib/0020348}.
It should be pointed out that the described adaption can be 
straightforwardly extend to general $\omega$-regular properties, 
corresponding to the expressiveness of B\"uchi automata, 
which are a strictly larger class of properties than LTL~\cite{DBLP:journals/iandc/Wolper83}.

As already mentioned, we consider implementing the current causality checking approach in an automated tool. Of particular interest is {QuantUM}~\cite{DBLP:journals/corr/abs-1107-1198}, a tool that enables the semi-formal specification of systems in terms of SysML~\cite{OMGSysML} and applies LTL model-checking for determining what caused the violation of a safety LTL property.
Recall that our notion of causality relies on the
complete enumeration of system traces. Hence, the main challenge is to determine all (lasso-shaped) counterexamples in an efficient way ({\it e.g.}, on-the-fly~\cite{DBLP:conf/spin/CouvreurDP05,DBLP:conf/tacas/SchwoonE05,DBLP:conf/ppopp/BloemenLP16,DBLP:conf/hvc/BloemenP16}).
Further future research comprises significant case studies in order to asses the scalability of our approach.\\[0.5ex]
{\bf{Acknowledgements. }}{The authors are grateful for the useful comments received from the anonymous reviewers of CREST 2018.
}


\vspace{-20pt}

\end{document}